\titlespacing*{\section}{0pt}{0.05cm}{0cm}
\titlespacing*{\subsection}{0pt}{0.05cm}{0cm}
\titlespacing*{\subsubsection}{0pt}{0.05cm}{0cm}
\theoremstyle{remark}
\newcommand\scalemath[2]{\scalebox{#1}{\mbox{\ensuremath{\displaystyle #2}}}}
\newtheorem{lemma}{\bf \emph{Lemma}}
\newtheorem{remark}{\textit{Remark}}
 \def\cF{{\mathcal{F}}}
\def\cQ{{\mathcal{Q}}}  \def\cS{{\mathcal{S}}} \def\cT{{\mathcal{T}}}
   \def\cX{{\mathcal{X}}}
\def\cY{{\mathcal{Y}}} 
\def\bff{{\mathbf{f}}}    
   \def\bs{{\mathbf{s}}} \def\bt{{\mathbf{t}}}
\def\bu{{\mathbf{u}}} \def\bv{{\mathbf{v}}} \def\bw{{\mathbf{w}}} \def\bx{{\mathbf{x}}} \def\by{{\mathbf{y}}}
\def\bz{{\mathbf{z}}} 
\def\bF{{\mathbf{F}}}  \def\bH{{\mathbf{H}}} \def\bI{{\mathbf{I}}}
  \def\bW{{\mathbf{W}}} \def\bX{{\mathbf{X}}}
\def\N{{\mathbb{N}}}  \def\R{{\mathbb{R}}} \def\C{{\mathbb{C}}}       \def\E{\mathbb{E}}
\def\blockdiag{\mathop{\mathrm{blk}}}
\def\argmin{\mathop{\mathrm{argmin}}}
     \def\d4{\!\!\!\!}
  \def\-{\! - \!}  \def\+{\! + \!}  \def\={\! = \!}  \def\>{\! > \!}
\newcommand{\bef}{\begin{figure}}
\newcommand{\eef}{\end{figure}}
\newcommand{\beq}{\begin{eqnarray}}
\newcommand{\eeq}{\end{eqnarray}}
\def\sumK{\sum_{k=1}^{K}}
\def\sumN{\sum_{n=1}^{N}}
\def\sumM{\sum_{m=1}^{M}}
\def\sumP{\sum_{p=1}^{P}}
\begin{document}
\title{ 
No Analog Combiner TTD-based Hybrid Precoding for Multi-User Sub-THz Communications
%\vspace{-0.2\baselineskip}
}
\author{
    \IEEEauthorblockN{Dang Qua Nguyen\IEEEauthorrefmark{1}, Alexei Ashikhmin\IEEEauthorrefmark{2}, Hong Yang\IEEEauthorrefmark{2}, and Taejoon Kim\IEEEauthorrefmark{1}}
    \IEEEauthorblockA{\IEEEauthorrefmark{1}\textit{Department of Electrical Engineering and Computer Science, The University of Kansas, Lawrence, KS 66045 USA}\\ \IEEEauthorrefmark{2}\textit{Nokia Bell Labs, Murray Hill, NJ 07974 USA}}
    Email: \IEEEauthorrefmark{1}\{quand, taejoonkim\}@ku.edu \& \IEEEauthorrefmark{2}\{alexei.ashikhmin, h.yang\}@nokia-bell-labs.com
   %\\[-1.0ex]
   \thanks{ The work of D. Q. Nguyen and T. Kim was supported in part by the National Science Foundation (NSF) under Grant CNS1955561, CNS2212565, and CNS2225577, and the Office of Naval Research (ONR) under Grant N000142112472.}
}

\maketitle
\begin{abstract}
We address the design and optimization of real-world-suitable hybrid precoders for multi-user wideband sub-terahertz (sub-THz) communications. 
We note that the conventional fully connected true-time delay (TTD)-based architecture is impractical because there is no room for the required large number of analog signal combiners in the circuit board. 
Additionally, analog signal combiners incur significant signal power loss. 
These limitations are often overlooked in sub-THz research.  
To overcome these issues, we study a non-overlapping subarray architecture that eliminates the need for analog combiners. 
We extend the conventional single-user assumption by formulating an optimization problem to maximize the minimum data rate for simultaneously served users. 
This complex optimization problem is divided into two sub-problems. 
The first sub-problem aims to ensure a fair subarray allocation for all users and is solved via a continuous domain relaxation technique. 
The second sub-problem deals with practical TTD device constraints on range and resolution to maximize the subarray gain and is resolved by shifting to the phase domain. 
Our simulation results highlight significant performance gain for our real-world-ready TTD-based hybrid precoders.\looseness=-1
\end{abstract}
\begin{IEEEkeywords}\noindent 
Sub-THz communication, multi-user, true-time delay, joint delay and phase precoding, non-overlapping subarrays, and hybrid massive MIMO. \looseness=-1
\end{IEEEkeywords}
\section{Introduction}
Sub-terahertz (sub-THz) band (90-300 GHz) communication is a promising technology for achieving hundreds of Gbps data rates in the sixth-generation (6G) wireless systems \cite{Giordani2020}. 
For sub-THz communications, a hybrid massive multiple-input multiple-output (MIMO) system combined with orthogonal frequency division multiplexing (OFDM) has been popularly studied \cite{Yuan2022}.
The considerably wide bandwidth and large array sizes of the hybrid massive MIMO OFDM system, however, cause beam squint, which deviates the beam direction across OFDM subcarriers, severely deteriorating array gain \cite{Wang2019, Qua2022}. %\looseness=-1    
Recently, the true-time delay (TTD)-based hybrid precoding has emerged as a method to address the beam squint \cite{dai2021, Gao2021, Matthaiou2021, Qua2023, Najjar2023}.
Unlike phase shifter (PS)-based hybrid precoding, which applies frequency-independent phase adjustments, TTD-based hybrid precoding introduces frequency-dependent phase modifications to cope with the beam squint.
 Previous works \cite{dai2021, Gao2021, Matthaiou2021, Qua2023, Najjar2023} mainly studied single-user TTD-PS precoders for the fully-connected architecture, where each data stream uses the whole antenna array. 
We note that this fully connected TTD-PS architecture is impractical due to the significantly increased hardware complexity and power loss \cite{Cabric2018CombinerLoss}. 
These practical limitations are often neglected in sub-THz hybrid precoders design research. 
To reduce the hardware complexity of the fully-connected architecture, a non-overlapping subarrays architecture is introduced for the downlink multi-user TTD-based hybrid precoding systems. 
One data stream is conventionally mapped to a subarray dedicated to a user \cite{Najjar2022}, an issue remaining is then when the number of subarrays is larger than the number of users. 
Moreover, most of the prior works assumed either unbounded \cite{dai2021, Gao2021, Matthaiou2021, Najjar2022} or infinite resolution \cite{dai2021, Gao2021, Matthaiou2021, Qua2023, Najjar2022}  TTD values, which are impractical \cite{Lin2022}. %\looseness=-1
Rate fairness among users in downlink multi-user networks is a pragmatic objective that has been widely studied in massive MIMO literature \cite{ZQLou2016, A.Hien2017, Hong2017, TJ2021}.
These works use power control and/or precoder optimization to maximize the minimum rate or minimum signal-to-interference-plus-noise-ratio (SINR) performance.
The underlying assumption is a fully-digital array architecture, which is unsuitable for wideband sub-THz massive MIMO systems \cite{dai2021}. 
Addressing the challenges in designing and optimizing real-world-suitable TTD-based hybrid precoders for fair multi-user wideband sub-THz communications is still nascent. 
This paper investigates the non-overlapping TTD-based subarray architecture and aims to achieve rate fairness in downlink multi-user wideband sub-THz massive MIMO OFDM systems.
The rate fairness is achieved through the design and optimization of subarray allocation and hybrid precoding methods.  
This problem is divided into two sub-problems.
The first sub-problem deals with a fair subarray allocation to users, which addresses the case when there is more number of subarrays than the number of users and is solved by continuous domain relaxation techniques. 
The second sub-problem addresses the practical TTD device constraints on range and resolution to maximize the subarray gain at each subarray which is resolved by a phase domain transformation approach.
Our simulations verify the advantages of the proposed framework compared to the existing approach \cite{Najjar2022}. 
    \noindent \textit{Notation:} A bold lower case letter $\bx$ is a column vector and a bold upper case letter $\bX$ is a matrix. 
    $\bX^T$, $\bX^{\dag}$, $\bX(i,j)$, $\|\bx\|$, and $|x|$ are, respectively, the transpose, conjugate transpose, $i$th row and $j$th column entry of $\bX$, $2$-norm of $\bx$, and modulus of $x\in \C$. 
    $\blockdiag(\bX_1, \bX_2, \dots, \bX_N)$ is an $ NN_1 \times NN_2$ block diagonal matrix such that its main-diagonal blocks contain $\bX_n \in \C^{N_1\times N_2}$, for $n=1,\dots,N$, and all off-diagonal blocks are zero.
    $\bI_{n}$ denotes the $n\times n$ identity matrix.
    $e^{j\bx}$ denotes the vector $[e^{jx_1}~e^{jx_2} ~\dots ~e^{jx_n}]^T \in \C^{n \times 1}$ obtained by applying $e^j$ element-wise to $\bx = [x_1~x_2\dots~x_n]^T\in \R^{n \times 1}$. 
    \looseness=-1
\section{Channel Model}
We consider a downlink multi-user wideband sub-THz massive MIMO OFDM channel. 
The transmitter is equipped with an $N_t$-element uniform linear array (ULA) and communicates with $N$ distributed users. 
Each user is equipped with an $N_r$-element ULA. 
We let $f_c$, $B$, and $K$ be, respectively, the central (carrier) frequency, bandwidth of the OFDM system, and the number of OFDM subcarriers (an odd integer so that there is a central subcarrier).
The $k$th subcarrier frequency is given by \looseness=-1
%\small
\vspace{-0.1cm}
\begin{equation}
     \label{eqfk}
     \scalemath{0.9}{f_k = f_c + \frac{B}{K}\Big(k-1-\frac{K-1}{2}\Big),\text{ for } k = 1,\dots,K.}
\vspace{-0.1cm}
\end{equation}
%\normalsize
%%

%%
Due to the sparsity of the sub-THz channel \cite{Han2023}, we assume that there is one path from the transmitter to each user. 
Assuming the far-field communication, the channel between the transmitter and the $n$th user at the $k$th subcarrier is modeled by \looseness=-1 
%\small
\vspace{-0.1cm}
\begin{equation}
    \label{eq:channel}
    \scalemath{1}{\bH_{k,n} = \sqrt{N_tN_r} \alpha_{k,n} e^{-j 2 \pi f_k \tau_{n}} \bv_{k,n}\bu^{\dag}_{k,n} \in \C^{N_r \times N_t},}
    \vspace{-0.1cm}
\end{equation}\normalsize 
where $\alpha_{k,n} \in \C$ and  $\tau_{n} \in \R$ are the channel path gain and propagation delay, respectively.
The molecular absorption property of sub-THz propagation waves is characterized by the frequency-dependent gain $\alpha_{k,n}$ \cite{Jornet2011, Han2015}.  
The $\bv_{k,n} \in \C^{N_r \times 1}$ and $\bu_{k,n} \in \C^{N_t \times 1}$ are the receive and transmit array response vectors, respectively.
The $i$th entry of $\bu_{k,n}$ is given by $\bu_{k,n}(i) = \frac{1}{\sqrt{N_t}}e^{-j \pi \frac{2d f_k}{c} (i-1) \sin(\Psi_{n})}$, where the $\Psi_{n}$ is the angle of departure (AoD), and c is the speed of light.
We define $\psi_{k,n} = \frac{2d f_k}{c}\sin(\Psi_{n})$ as the spatial direction from the transmitter to the $n$th user at the $k$th subcarrier.
Assuming the half-wavelength antenna spacing, i.e., $d = \frac{c}{2f_c}$, the spatial direction at the central frequency is reduced to $\psi_{n} = \sin(\Psi_{n})$.
Defining $\xi_k = \frac{f_k}{f_c}$ yields 
$\psi_{k,n} = \xi_k \psi_{n}$ and \looseness=-1
\vspace{-0.1cm}
%\small
\begin{equation}
\label{eq:xik}
\scalemath{0.9}{\xi_k = 1 + \frac{B}{f_c}\Big(\frac{k-1-\frac{K-1}{2}}{K}\Big),}
%\vspace{-0.3cm}
 \vspace{-0.1cm}
\end{equation}\normalsize 
where \eqref{eq:xik} follows from the definition of $f_k$ in \eqref{eqfk}. 
As a result, the transmit array response vector $\bu_{k,n}$ in \eqref{eq:channel} is 
%%
%\small
\vspace{-0.1cm}
\begin{equation}
    \label{eq:arrayRV}
    \scalemath{0.95}{\bu_{k,n} = \frac{1}{\sqrt{N_t}}\Big[1 ~e^{j\pi\psi_{k,n}}~\dots ~e^{j\pi (N_t-1)\psi_{k,n}} \Big]^{\dagger}  \in  \C^{N_t \times 1}.}
\vspace{-0.1cm}
\end{equation}
% %%
The receive array response vector $\bv_{k,n}$ is defined similarly,
\vspace{-0.1cm}
\begin{equation}
    \label{eq:arrayRV2}
    \scalemath{0.95}{\bv_{k,n} = \frac{1}{\sqrt{N_r}}\Big[1 ~e^{j\pi\phi_{k,n}}~\dots ~e^{j\pi (N_r-1)\phi_{k,n}} \Big]^{\dagger}  \in  \C^{N_r \times 1},}
\vspace{-0.1cm}
\end{equation}
where $\phi_{k,n} = \xi_k\phi_n$, $\phi_n = \sin(\Phi_n)$, and $\Phi_n$ is the angle of arrival (AoA) at the $n$th user.
\section{System Model and Problem Formulation}
%\vsapce{-0.5cm}
%%
   \begin{figure}[htp]
    \centering     \includegraphics[width=0.48\textwidth]{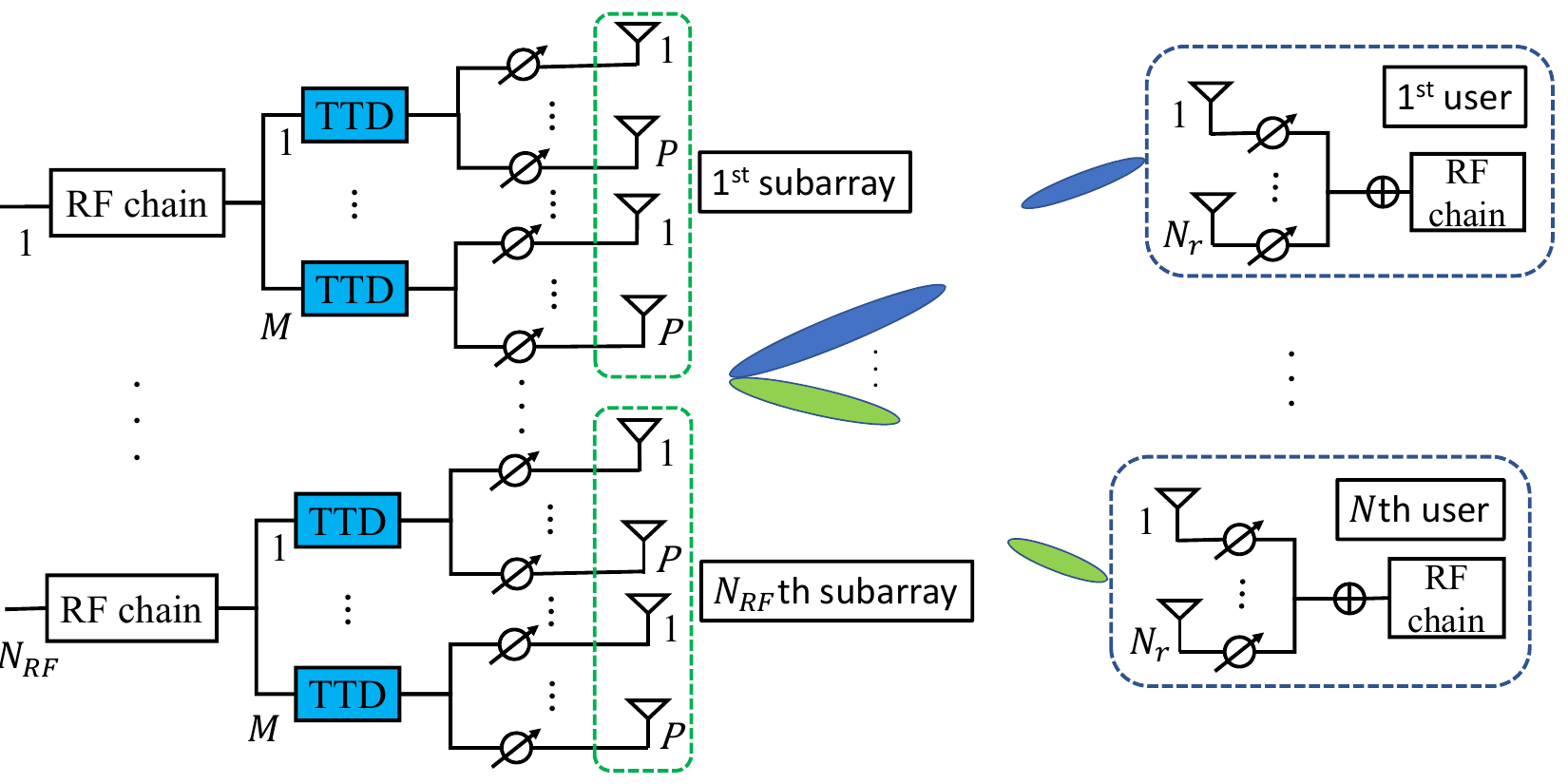}
    %\vsapce{-0.1cm}
    \caption{Non-overlapping subarray TTD-based hybrid precoding for multi-user massive MIMO}
    \label{Fig1}
    \vspace{-0.5cm}
    \end{figure} 
We consider a non-overlapping subarray TTD-based hybrid precoding architecture, as shown in Fig.~\ref{Fig1}, where the transmit ULA is fed by $N_{RF}$ radio frequency (RF) chains to transmit $N_s$ data streams ($N_{RF} \geq N_s$) simultaneously to  $N$ single-RF chain users.
In what follows, we assume $N_s = N$ and use $N$ to denote both the number of users and the number of data streams.  
Each RF chain corresponds to a subarray, as illustrated in Fig.~\ref{Fig1}.
Particularly, each RF chain drives $M$ TTDs and each TTD is connected to $P$ PSs, i.e., $N_t = N_{RF}MP$.
We note that analog combiners are no longer required with this subarray architecture. This is in contrast with the prior works like \cite{dai2021, Gao2021, Matthaiou2021, Qua2023, Najjar2023} that need the use of analog combiners for precoding. 
A TTD delays the transmit signal at the $k$th subcarrier by $t$ in the time domain, equivalent to $-2\pi f_kt$ phase rotation in the frequency domain.
Unlike prior works \cite{Najjar2022,dai2021, Gao2021, Matthaiou2021, Qua2023}, we assume that the time delay $t$ is chosen in a finite set $\cT$ such that $t \in \cT = \{0,\tau, 2\tau, \dots, (Q-1)\tau\}$, where $\tau$ is the step size and $Q$ is the quantization level. 
Due to the non-overlapping subarray in Fig.~\ref{Fig1}, the transmit array response vector $\bu_{k,n}$ in \eqref{eq:arrayRV} can be rewritten as $\scalemath{0.9}{\bu_{k,n} = [\bu_{k,n,1}^T~\bu_{k,n,2}^T~ \dots~ \bu^T_{k,n,N_{RF}}]^T}$, where $\scalemath{0.9}{\bu_{k,n,l} \in \C^{MP \times 1}}$ is the $l$th subarray response vector, for $l=1,\dots,N_{RF}$, and is given by
\vspace{-0.1cm}
%\small
\begin{equation}
    \label{eq:subarrayRV}
    \scalemath{0.95}{\d4\bu_{k,n,l} \!=\! \frac{1}{\sqrt{N_t}}e^{-j \pi (l-1)MP \psi_{k,n}}\!\Big[ 1~e^{j\pi\psi_{k,n}} \dots e^{j\pi (MP-1)\psi_{k,n}} \Big]^{\dagger}\!\!.}
\vspace{-0.1cm}
\end{equation}\normalsize
To facilitate the subarray allocation to multi-users, we define a subset of subarray indices that are used to serve the $n$th user as $\cS_n$, where $\bigcup_{n=1}^{N}\cS_n = \{1,2,\dots,N_{RF}\}$, $\cS_n \neq \emptyset$, and $\cS_{n} \cap \cS_{n'} = \emptyset$, $\forall n \neq n'$.
To reduce design complexity, we assume a consecutive subarray allocation to each user such that $\cS_1 = \{1,2,\dots,|\cS_1|\}$, $\cS_2 = \{|\cS_1|+1,\dots, |\cS_1|+|\cS_2|\}, \dots, \cS_N = \{1+\sum_{n=1}^{N-1}|\cS_{n}|,\dots,N_{RF}\}$, where $|\cS_n|$ denotes the cardinality of $\cS_n$. 
This simplification turns the problem of subarray allocation into the problem of determining $\{  |\cS_n|\}_{n=1}^N$. \looseness=-1
The received signal of the $n$th user at the $k$th subcarrier is then given by \looseness = -1
\vspace{-0.1cm}
%\small
\begin{equation}
\label{eq:signal}
    \scalemath{0.95}{\by_{k,n} = \sqrt{\rho} \bH_{k,n}\bF_1\bF_{2,k}\bW_{k}\bs_k + \bz_{k,n},} 
\vspace{-0.1cm}
\end{equation}  
where $\rho$ is the transmit power, $\bH_{k,n} \in \C^{N_r \times N_{t}}$ is defined in \eqref{eq:channel}, $\bF_1 \in \C^{N_{t} \times N_{RF}M} $ is the PS precoder, $\bF_{2,k} \in \C^{N_{RF}M \times N_{RF}}$ is the TTD precoder, $ \bW_{k} \in \C^{N_{RF} \times N}$ is the digital precoder, $\bz_{k,n} \in \C^{N_r \times 1}$ is the complex Gaussian random noise vector with zero mean and covariance $\bI_{N_r}$, and $\bs_{k} = [s_{k,1} \dots s_{k,N}]^T \in \C^{N \times 1}$ is the vector of transmit symbols such that $\E[\bs_k\bs^{\dag}_k] = \frac{1}{N}\bI_{N}$. 
Specifically, the $n$th symbol $s_{k,n}$ is mapped to the $n$th user.
The PS precoder $\bF_1$ is defined as 
%\small
\vspace{-0.1cm}
\begin{equation}
        \label{eq:F1}
        \scalemath{0.95}{\bF_1 = \blockdiag(\bX_1, \dots, \bX_{N_{RF}}),} 
\vspace{-0.1cm}
\end{equation}
where $\bX_l =  \blockdiag (e^{j \pi \bx_{l,1}}, \dots, e^{j \pi \bx_{l,M}}) \in \C^{MP \times M}$ is the $l$th PS sub-precoder and $\bx_{l,m} = [x_{l,m,1} \dots x_{l,m,P}]^T \in \R^{P \times 1}$ is the phase shifter sub-vector connected to the $m$th TTD of the $l$th RF chain.
We denote $\cX_{N_t,N_{RF},M}$ as the set of all $\bF_1$ defined in \eqref{eq:F1}. 
The TTD precoder $\bF_{2,k}$ is given by \sloppy
%%
%\small
\vspace{-0.1cm}
\begin{equation}
    \label{eq:F2k}
    \scalemath{0.95}{\bF_{2,k} = \blockdiag (e^{-j2\pi f_k \bt_{1}}, \dots , e^{-j2\pi f_k \bt_{N_{RF}}}),}
\vspace{-0.1cm}
\end{equation} 
where $e^{-j2\pi f_k \bt_{l}} \in \C^{M \times 1}$ is the $l$th TTD sub-precoder and $\bt_{l} = [t_{l,1} \dots t_{l,M}]^T \in \cT^{M \times 1}$ is the delay values of the $M$ TTDs connected to the $l$th RF chain.
We denote $\cF_{k,N_{RF},M}$ as the set of all $\bF_{2,k}$ defined in \eqref{eq:F2k}. 
Combining \eqref{eq:F1} and \eqref{eq:F2k}, we have the analog precoder at the $k$th subcarrier 
\vspace{-0.1cm}
\begin{equation}
    \label{eq:Fk}
    \bF_1\bF_{2,k} = \blockdiag(\bff_{k,1}, \dots, \bff_{k,N_{RF}}),
    \vspace{-0.1cm}
\end{equation}
where $\bff_{k,l} \in \C^{MP \times 1}$ is the $l$th analog sub-precoder and is given by 
 %\small
\vspace{-0.1cm}
 \begin{equation}
 \label{eq:subprecoder}
 \scalemath{0.95}{\bff_{k,l} = \bX_{l}e^{-j2\pi f_k \bt_{l}}, \forall k, l.}
 \vspace{-0.1cm}
 \end{equation}
The digital precoders $\{\bW_k\}_{k=1}^{K}$ in \eqref{eq:signal} are normalized to satisfy the transmit power constraint $\scalemath{0.9}{\|\bF_1\bF_{2,k}\bW_k\|^2_F = N}$, $\forall k$.\sloppy 
Taking into account the inter-user interference, the received signal at the $n$th user can be rewritten as
\vspace{-0.2cm}
\begin{multline}
    \label{eq:signal2}
    \d4\scalemath{0.9}{\by_{k,n} = \sqrt{\rho}\bH_{k,n} \bF_1\bF_{2,k}\bw_{k,n}s_{k,n}} \\ + \scalemath{0.9}{\!\sum_{n' \neq n}\!\! \sqrt{\rho}\bH_{k,n} \bF_1\bF_{2,k} \bw_{k,n'} s_{k,n'} 
    + \bz_{k,n}} \in \C^{N_r \times 1}, 
%\vsapce{-0.1cm}
\end{multline}
where the second term on the right-hand side (r.h.s) of \eqref{eq:signal2} represents the inter-user interference and $\bw_{k,n} \in \C^{N_{RF} \times 1}$ denotes the $n$th column of $\bW_k$. 
Based on \eqref{eq:signal2}, the achievable rate of the $n$th user is given by \looseness=-1
 \vspace{-0.1cm}
\begin{equation}
    \label{eq:Rate}
    \scalemath{0.9}{R_n = \sumK \log_2 \Big( 1 + \frac{\frac{\rho}{N}\|\bH_{k,n} \bF_1\bF_{2,k}\bw_{k,n}\|^2}{ \sum_{n' \neq n} \frac{\rho}{N}\|\bH_{k,n} \bF_1\bF_{2,k} \bw_{k,n'}\|^2+1} \Big).}
 \vspace{-0.1cm}
\end{equation}
The objective is to design the subarray allocation $\{|\cS_n|\}$, and precoders $\bF_1$, $\{\bF_{2,k}\}$, and $\{\bW_k\}$ that maximize the minimum $R_n$ in \eqref{eq:Rate}, ensuring a fair rate performance for all users: \looseness=-1  
\vspace{-0.2cm}
\begin{subequations}
\label{eq:globalProblem}
 \beq
    \d4\d4\!\d4\d4\d4\max_{\{|\cS_n|\},\bF_1, \{\bF_{2,k}\}, \{\bW_k\}} \! \min_{n}\!\! && \d4\d4 R_n, \label{eq:GPa}\\[0.5pt]
    \text{subject to}\!\! && \d4\d4 \scalemath{0.9}{\sumN |\cS_n| = N_{RF}, |\cS_n| \in \N, \forall n,} \label{eq:GPb} \\
    && \d4\d4 \bF_1 \! \in \! \cX_{N_t,N_{RF},M},   \label{eq:GPc}\\ 
    && \d4\d4 \|\bF_1\bF_{2,k}\bW_k\|_F^2 = N, \label{eq:GPd}\\
    && \d4\d4 \bF_{2,k} \! \in \! \cF_{k,N_{RF},M},\label{eq:GPe} 
\eeq 
\end{subequations}where the constraint in \eqref{eq:GPb} corresponds to the total number of subarrays and $\N$ denotes the set of natural numbers. 
The constraints in \eqref{eq:GPc} is due to the structure of the PS precoder $\bF_1$ in \eqref{eq:F1}, \eqref{eq:GPd} indicates the normalized power of the precoders, and \eqref{eq:GPe} is due to the structure of $
\bF_{2,k}$ in \eqref{eq:F2k}.\looseness=-1
The problem \eqref{eq:globalProblem} is not directly solvable due to the coupling between the precoders and the non-convex constraints \eqref{eq:GPb}--\eqref{eq:GPe}.
In this paper, this difficulty is approached by devising efficient algorithms through decomposition of the original max-min problem in \eqref{eq:globalProblem} into two sub-problems: (i) user subarrays allocation problem and (ii) joint PS and TTD sub-precoders design problem. 
Subsequently, the digital precorders $\{\bW_k\}$ are designed by standard approaches \cite{Sayed2007, Alkhateeb20152}.
In the following sections, we focus on the design and optimization of the analog precoders and fair subarray allocation. 
\looseness=-1 

\section{Fair Subarray Allocation}\label{SecIV}
% %%
% Next, the subarray allocation sub-problem is formulated as follows. \looseness=-1
In this section, we propose a max-min optimization approach for allocating the number of subarrays $\{|\cS_n|\}$ to each user while fixing the precoders $\bF_1$, $\{\bF_{2,k}\}$, and $\{\bW_k\}$.
To incorporate a tractable objective function, we consider an asymptotic upper bound of $R_n$ in \eqref{eq:Rate} as the number of antennas at each subarray tends to infinity ($MP \to \infty$).
To this end, the analog sub-precoders $\{ \bff_{k,l} \}$ in \eqref{eq:subprecoder} are assumed to be ideal analog sub-precoders (IASP) to maximize the respective subarray gains \cite{dai2021,Qua2022}, which are designed to perfectly match the subarray response vectors for $l \in \cS_n$ such that 
\vspace{-0.1cm}
\begin{equation}
\label{eq:assumption1}
    \bff_{k,l} = \bu_{k,n,l}, \forall k, 
\vspace{-0.1cm}
\end{equation}
where $ \bu_{k,n,l}$ is defined in \eqref{eq:subarrayRV}.
Furthermore, \eqref{eq:GPd} implies that the power of each column of $\bW_k$ must be bounded:
\vspace{-0.1cm}
\begin{equation}
\label{eq:assumption2}
    \max_{k,n}\|\bw_{k,n}\|^2 = \omega < \infty.
\vspace{-0.1cm}
\end{equation} 
Given the latter assumptions, an assymptotic upper bound of $R_n$ in \eqref{eq:Rate} is presented in Lemma~\ref{lm:RateUB}.   
\begin{lemma}
    \label{lm:RateUB}
    The following bound holds, $\lim_{MP \rightarrow \infty}R_n \leq \frac{\rho\omega N_r N_t}{\ln(2)N}\frac{|\cS_n|}{N_{RF}}\sumK|\alpha_{k,n}|^2$ with probability 1, for $n=1,\dots,N$.
\end{lemma}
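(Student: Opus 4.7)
The plan is to bound $R_n$ by a three-stage simplification: (i) linearize the $\log_2(1+\cdot)$ and drop the interference term from the denominator, (ii) exploit the rank-one structure of $\bH_{k,n}$ in \eqref{eq:channel} together with the block-diagonal structure of $\bF_1\bF_{2,k}$ in \eqref{eq:Fk} to reduce the problem to a sum of subarray inner products, and (iii) use the ideal sub-precoder alignment \eqref{eq:assumption1} together with asymptotic orthogonality of steering vectors of distinct users to collapse that sum to the subarrays in $\cS_n$ only, then close with Cauchy--Schwarz.

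First I would apply $\log_2(1+x)\le x/\ln 2$ to each summand of \eqref{eq:Rate} and drop the nonnegative interference from the SINR denominator; both operations enlarge the expression, giving
\begin{equation*}
R_n\le\frac{\rho}{N\ln 2}\sumK\bigl\|\bH_{k,n}\bF_1\bF_{2,k}\bw_{k,n}\bigr\|^2.
\end{equation*}
Inserting the rank-one factorization of $\bH_{k,n}$ and using $\|\bv_{k,n}\|=1$ collapses the squared norm to $N_tN_r|\alpha_{k,n}|^2|\bu^\dag_{k,n}\bF_1\bF_{2,k}\bw_{k,n}|^2$. The block-diagonal form of $\bF_1\bF_{2,k}$ together with the subarray partition $\bu_{k,n}=[\bu^T_{k,n,1},\ldots,\bu^T_{k,n,N_{RF}}]^T$ then lets me write
\begin{equation*}
\bu^\dag_{k,n}\bF_1\bF_{2,k}\bw_{k,n}=\sum_{l=1}^{N_{RF}}\bigl(\bu^\dag_{k,n,l}\bff_{k,l}\bigr)w_{k,n,l}.
\end{equation*}

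The main obstacle is the asymptotic analysis of these $N_{RF}$ inner products as $MP\to\infty$. For $l\in\cS_n$, assumption \eqref{eq:assumption1} aligns $\bff_{k,l}$ with $\bu_{k,n,l}$ and the inner product contributes a deterministic magnitude of order $1/\sqrt{N_{RF}}$ under the normalization of \eqref{eq:subarrayRV}. For $l\notin\cS_n$, $\bff_{k,l}=\bu_{k,n',l}$ for the user $n'\neq n$ to whom subarray $l$ has been allocated, and $\bu^\dag_{k,n,l}\bu_{k,n',l}$ reduces (up to a unit-modulus phase) to a length-$MP$ Dirichlet kernel evaluated at $\psi_{k,n}-\psi_{k,n'}$, whose magnitude is bounded by $\bigl(N_t|\sin(\pi(\psi_{k,n}-\psi_{k,n'})/2)|\bigr)^{-1}$. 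This vanishes as $N_t=N_{RF}MP\to\infty$ provided $\psi_{k,n}\neq\psi_{k,n'}$, and since the AoDs are drawn from a continuous distribution, the event $\psi_{k,n}=\psi_{k,n'}$ has measure zero. This is where the ``with probability $1$'' qualifier in the lemma originates, and the main care needed is in promoting pointwise nonzero denominators to an almost-sure bound uniform over the finite index set $\{k,l,n'\}$.

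Once the off-subarray cross terms drop out in the limit, only $l\in\cS_n$ survives, and Cauchy--Schwarz together with \eqref{eq:assumption2} yields
\begin{equation*}
\Bigl|\sum_{l\in\cS_n}w_{k,n,l}\Bigr|^2\le|\cS_n|\sum_{l\in\cS_n}|w_{k,n,l}|^2\le|\cS_n|\,\omega.
\end{equation*}
Combining the pieces gives $\lim_{MP\to\infty}|\bu^\dag_{k,n}\bF_1\bF_{2,k}\bw_{k,n}|^2\le|\cS_n|\omega/N_{RF}$ almost surely. Multiplying by the $N_tN_r|\alpha_{k,n}|^2$ prefactor, summing over $k$, and pulling the $\rho/(\ln(2)N)$ constant outside produces exactly the asymptotic upper bound stated in the lemma.
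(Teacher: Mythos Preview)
Your proof is correct and follows essentially the same route as the paper: drop the interference, linearize the logarithm, use the rank-one channel and the block-diagonal analog precoder to reduce to subarray inner products, invoke the IASP alignment \eqref{eq:assumption1} together with asymptotic orthogonality of distinct-user subarray steering vectors, and close with Cauchy--Schwarz and the bound \eqref{eq:assumption2}. The only cosmetic differences are that the paper applies Cauchy--Schwarz first (splitting off $\|\bw_{k,n}\|^2$ and then evaluating $\|\bu^\dag_{k,n}\bF_1\bF_{2,k}\|^2$) whereas you postpone it until after the limit, and the paper cites prior work for the orthogonality step where you give the explicit Dirichlet-kernel bound.
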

\begin{proof}
    See Appendix~\ref{appendixC}.
\end{proof}
Incorporating the asymptotic upper bound in Lemma~\ref{lm:RateUB}, the max-min fairness subarray allocation problem is given by
\vspace{-0.2cm}
%\small
\begin{subequations}
    \label{subarray_allocation}
    \beq   
    \d4\d4\max_{\{|\cS_n|\}}\min_{n} && \d4 \scalemath{0.9}{\widetilde{\alpha}_{n}|\cS_n|},\label{eq:SAa}\\
    \text{subject to} &&\d4  \scalemath{0.9}{\sumN |\cS_n| = N_{RF}, |\cS_n| \in \N, \forall n}, \label{eq:SAc}
    \eeq 
\end{subequations}
where $\widetilde{\alpha}_n = \sumK |\alpha_{k,n}|^2$ is the sum of the channel gains at the $n$th user and the constraint in \eqref{eq:SAc} corresponds to \eqref{eq:GPb}.
The integer programing in \eqref{subarray_allocation} can be conventionally addressed by a branch, bound, and reduced \cite{Lawler1966, TJ2015} and cutting plane algorithms \cite{Gomory1961cutting}. 
Albeit straightforward, the worst case complexity of these algorithms is combinatorial with $N$.
Instead, we address the discrete constraint in \eqref{eq:SAc} via a continuous domain relaxation technique:   \looseness=-1
\vspace{-0.2cm}
%\small
\begin{subequations}
    \label{subarray_allocation2}
    \beq 
    \d4\d4\max_{\{|\cS_n|\}} ~~ \min_{n} &&\d4 \scalemath{0.9}{\widetilde{\alpha}_{n}|\cS_n|},\label{eq:SAa2}\\
    \text{subject to} && \d4  \scalemath{0.9}{\sumN |\cS_n| = N_{RF}, |\cS_n| > 0, \forall n}. \label{eq:SAb2}
    \eeq 
%\vspace{-0.2cm}
\end{subequations}
The following lemma characterizes the closed-form solution to \eqref{subarray_allocation2}. \looseness=-1
\begin{lemma}
\label{lemma:optimal_SA2}
The closed-form solution to  \eqref{subarray_allocation2} is given by 
$|\cS_n| = \frac{N_{RF}/\widetilde{\alpha}_n}{\sumN 1/\widetilde{\alpha}_n}, \forall n.$
\end{lemma}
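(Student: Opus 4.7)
The plan is to recognize \eqref{subarray_allocation2} as a standard max-min resource allocation problem and to show that, at the optimum, all the objective terms $\widetilde{\alpha}_n |\cS_n|$ must be equal. Once this equality is established, the closed-form follows immediately from the sum constraint.

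First, I would reformulate \eqref{subarray_allocation2} in epigraph form by introducing a slack variable $t$, replacing the inner minimum by the constraint $\widetilde{\alpha}_n |\cS_n| \geq t$ for all $n$ and maximizing $t$ subject to $\sum_n |\cS_n| = N_{RF}$ and $|\cS_n| > 0$. Since $\widetilde{\alpha}_n > 0$, the feasible set is nonempty and the problem is bounded above (because $t \leq \widetilde{\alpha}_n |\cS_n| \leq \widetilde{\alpha}_n N_{RF}$), so an optimum $(t^\star, \{|\cS_n^\star|\})$ exists.

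Next, I would argue by contradiction that every inequality constraint is active, i.e.\ $\widetilde{\alpha}_n |\cS_n^\star| = t^\star$ for all $n$. Suppose instead that there exists an index $n'$ with $\widetilde{\alpha}_{n'} |\cS_{n'}^\star| > t^\star$, and let $n^\star \in \arg\min_n \widetilde{\alpha}_n |\cS_n^\star|$, so $\widetilde{\alpha}_{n^\star}|\cS_{n^\star}^\star| = t^\star$. Then for a sufficiently small $\epsilon > 0$, define $|\cS_{n'}| = |\cS_{n'}^\star| - \epsilon$ and $|\cS_{n^\star}| = |\cS_{n^\star}^\star| + \epsilon$, leaving the other entries unchanged. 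The sum constraint and positivity are preserved, while $\widetilde{\alpha}_{n^\star}|\cS_{n^\star}|$ strictly increases and $\widetilde{\alpha}_{n'}|\cS_{n'}|$ is still $> t^\star$ for small $\epsilon$. This raises the minimum above $t^\star$, contradicting optimality.

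With $\widetilde{\alpha}_n |\cS_n^\star| = t^\star$ for all $n$ in hand, I solve for $|\cS_n^\star| = t^\star / \widetilde{\alpha}_n$, substitute into the sum constraint to obtain $t^\star \sum_{n=1}^{N} 1/\widetilde{\alpha}_n = N_{RF}$, and read off the claimed expression $|\cS_n^\star| = \frac{N_{RF}/\widetilde{\alpha}_n}{\sum_{n=1}^N 1/\widetilde{\alpha}_n}$. The only mildly delicate step is the contradiction argument in the previous paragraph; everything else is arithmetic. An equivalent route, which I would mention only if space permits, is to invoke KKT for the relaxed linear program in the variables $(t, |\cS_1|, \dots, |\cS_N|)$: complementary slackness forces all $N$ inequality multipliers to be strictly positive (otherwise dual feasibility cannot hold jointly with the sum-constraint multiplier), which likewise yields $\widetilde{\alpha}_n|\cS_n^\star| = t^\star$ for every $n$.
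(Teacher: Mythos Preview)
Your argument is correct in spirit and reaches the same conclusion, but it proceeds along a different route from the paper. The paper dispatches the lemma in one line via a weighted-average inequality: since $\min_n \widetilde{\alpha}_n|\cS_n| \le \sum_n w_n\,\widetilde{\alpha}_n|\cS_n|$ for any convex weights $\{w_n\}$, the choice $w_n = (1/\widetilde{\alpha}_n)\big/\sum_{n'} (1/\widetilde{\alpha}_{n'})$ collapses the right-hand side to $\big(\sum_n |\cS_n|\big)\big/\sum_n (1/\widetilde{\alpha}_n) = N_{RF}\big/\sum_n(1/\widetilde{\alpha}_n)$, with equality precisely when all $\widetilde{\alpha}_n|\cS_n|$ coincide. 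This is shorter and sidesteps the epigraph reformulation and exchange step entirely. Your perturbation approach is the more ``operational'' one and transfers readily to other max-min allocation problems, but note a small technical gap: if several indices simultaneously attain the minimum $t^\star$, shifting $\epsilon$ to a single $n^\star$ does not strictly raise $\min_n \widetilde{\alpha}_n|\cS_n|$; you must distribute the borrowed mass across all active indices (or argue inductively on the number of inactive constraints). With that patch, both proofs are complete.
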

%\vspace{-0.2cm}
\begin{proof}
    See Appendix~\ref{appendixA}. 
\end{proof}
\begin{remark}\label{rmk1}
    Lemma~\ref{lemma:optimal_SA2} implies that we should allocate more subarrays to the users that experience more severe path-loss to provide a fair subarray allocation among users. 
\end{remark}
\noindent Discretizing the solution in Lemma~\ref{lemma:optimal_SA2} yields an approximate solution to \eqref{subarray_allocation}, which is given by 
\vspace{-0.2cm}
    \begin{subequations}
    \label{eq:closeform}
        \beq
        \scalemath{0.9}{|\cS_n|^{\star}} &=& \scalemath{0.9}{\Big\lfloor \frac{N_{RF}/\widetilde{\alpha}_n}{\sumN 1/\widetilde{\alpha}_n} \Big\rfloor, \text{for }n=1,\dots,N-1,} \label{eq:closeforma}\\
        \scalemath{0.9}{|\cS_N|^{\star}} &=& \scalemath{0.9}{N_{RF} - \sum_{n=1}^{N-1} | \cS_n |^{\star},} \label{eq:closeformb}
        \eeq 
    \end{subequations}
    where \eqref{eq:closeformb} is to ensure the first constraint in \eqref{eq:SAc} and $\lfloor x \rfloor$ is the largest integer that is less than or equal to $x$.\looseness=-1      
\section{Joint PS and TTD Sub-precoders Design}\label{SecV}
%\vspace{-0.5cm}
%%
   % \begin{figure}[htp]
   %  \centering     \includegraphics[width=0.47\textwidth]{SystemModel_ICC2.pdf}
   %  \caption{The $l$th subarray is allocated to transmit the $n$th symbol $s_{k,n}$ to the $n$th user.}
   %  \label{Fig2}
   %  %\vspace{-0.2cm}
   %  \end{figure} 
%%
Given the subarray allocation in \eqref{eq:closeform}, we now jointly design PS and TTD sub-precoders at the $l$th subarray to achieve the IASP in  \eqref{eq:assumption1} for maximizing the subarray gain.
However, solving for the $K$ coupled equations in \eqref{eq:assumption1} is ill-posed. 
Hence, we attempt to best approximate $\bu_{n,k,l}$ with the construction of $\bff_{k,l}$ in \eqref{eq:subprecoder} as presented,\looseness=-1 
\vspace{-0.1cm}
\begin{subequations}
    \label{eq:opt1}
    \beq
    \min_{\bX_l, \bt_l} && \d4\d4 \scalemath{1}{\sumK \|\bX_l e^{-j2 \pi f_k \bt_l} - \bu_{k,n,l}\|^2}, \label{eq:opt1_0}\\ 
    \text{subject to} 
    &&\d4\d4 \scalemath{1}{|\bX_{l}(i,j)| \in \Big\{0,\frac{1}{\sqrt{N_t}} \Big\}, \forall i, j,}    \label{eq:opt1a}\\
    &&\d4\d4 \scalemath{1}{t_{l,m} \in \cT, \forall m,}    \label{eq:opt1b}
    \eeq 
%\vspace{-0.1cm}
\end{subequations}
where the constraints in \eqref{eq:opt1a} and \eqref{eq:opt1b} are restatements of \eqref{eq:GPc} and  \eqref{eq:GPe}, respectively. 
These non-convex constraints in conjunction with the coupling between $\bX_l$ and $e^{-j2 \pi f_k \bt_l}$ make the problem in \eqref{eq:opt1} still difficult to solve. 
In conventional approaches \cite{ayach2014, Hadi2016}, the problem in \eqref{eq:opt1} is addressed via alternating optimization.
Unlike the previous work \cite{ayach2014, Hadi2016}, we transform \eqref{eq:opt1} to a problem in the phase domain to remove the non-convex constraint in \eqref{eq:opt1a}, and then the discrete constraint in \eqref{eq:opt1b} is dealt by an alternating optimization method.
From \eqref{eq:subarrayRV}, the objective function in \eqref{eq:opt1_0} can be rewritten as
\vspace{-0.2cm}
\begin{equation}
    \label{eq:obj1}
    \d4\scalemath{1}{\frac{1}{N_t}\sumK \!\sumM \! \sumP|e^{-j \pi 2f_c\xi_k t_{l,m}} e^{j \pi x_{l,m,p}}-e^{-j \pi  \xi_k\gamma_{n,l,m,p}}|^2,}
\end{equation} 
where $\scalemath{0.9}{\gamma_{n,l,m,p} = ((l-1)MP + (m-1)P+p-1)\psi_{n}}$ and $\xi_k$ is in \eqref{eq:xik}.
We recall a lemma in the previous work \cite[Lemma~1]{Qua2023} that allows us to transform the problem in \eqref{eq:opt1} into a relaxed problem in the phase domain.
     \begin{lemma}(\cite[Lemma~1]{Qua2023})%%
     \label{lm2}
      For a given $x\in \R$, and a closed set $\cY \subseteq  (x-\pi, x+\pi]$, $\argmin_{y\in \cY}|e^{jx}-e^{jy}|=\argmin_{y\in \cY}|x-y|$.
    \end{lemma}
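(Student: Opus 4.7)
The plan is to reduce the chordal distance $|e^{jx}-e^{jy}|$ to a monotone function of $|x-y|$, restricted to the window where the constraint $\cY \subseteq (x-\pi, x+\pi]$ makes that monotonicity valid. Concretely, I would first expand
\[
|e^{jx}-e^{jy}|^2 = 2 - 2\cos(x-y) = 4\sin^2\!\Big(\frac{x-y}{2}\Big),
\]
so that $|e^{jx}-e^{jy}| = 2\,|\sin((x-y)/2)|$. This rewriting removes the complex exponentials and turns the objective into a real function depending only on the difference $x-y$.

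Next I would exploit the assumption $\cY \subseteq (x-\pi, x+\pi]$, which forces $x-y \in [-\pi,\pi)$ for every candidate $y \in \cY$, hence $(x-y)/2 \in [-\pi/2,\pi/2)$. On this interval the map $\theta \mapsto |\sin\theta|$ is strictly increasing in $|\theta|$, so the minimum of $2\,|\sin((x-y)/2)|$ over $y \in \cY$ is attained at exactly the same $y$'s that minimize $|(x-y)/2|$, and therefore $|x-y|$. This yields the claimed equality of argmin sets.

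The main (and essentially only) subtlety is the domain restriction: without the hypothesis $\cY \subseteq (x-\pi, x+\pi]$ the equivalence would fail, because $|\sin(\cdot/2)|$ is periodic and loses its monotonicity once $(x-y)/2$ exits $[-\pi/2,\pi/2]$; a value $y$ differing from $x$ by $2\pi$, for instance, would make $|e^{jx}-e^{jy}| = 0$ while $|x-y|$ is large. I would therefore state explicitly that the half-open window $(x-\pi, x+\pi]$ is what guarantees a bijective correspondence between chordal distance and arc-length distance, and close the proof by noting that both sides of the argmin equality are well-defined because $\cY$ is closed and the objective is continuous.
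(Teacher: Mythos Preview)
Your argument is correct: the identity $|e^{jx}-e^{jy}| = 2\,|\sin((x-y)/2)|$ together with the strict monotonicity of $\theta \mapsto |\sin\theta|$ on $[-\pi/2,\pi/2)$ (guaranteed by the window $\cY \subseteq (x-\pi,x+\pi]$) gives exactly the claimed equality of argmin sets. Note, however, that the paper does not supply its own proof of this lemma; it is quoted verbatim from \cite[Lemma~1]{Qua2023}, so there is no in-paper argument to compare against, and your derivation is precisely the standard one that reference would contain.
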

\noindent Incorporating Lemma~\ref{lm2} into \eqref{eq:obj1} transforms \eqref{eq:opt1} to a relaxed problem:
\vspace{-0.1cm}
% \begin{subequations}
%     \label{eq:opt2}
%     \beq
%     \d4\d4\d4\d4\d4\scalemath{1}{\min_{\{x_{l,m,p}\}, \{t_{l,m}\}}} && \d4\d4\!\! \scalemath{0.92}{\sumK\!\sumM\!\sumP (\!-2f_c\xi_k t_{l,m} \!+\! x_{l,m,p} \!+\! \xi_k \gamma_{n,l,m,p})^2}\!\!, \label{eq:opt2a}\\ 
%     \text{subject to} 
%     && \d4\d4\!\! \scalemath{1}{t_{l,m} \in \cT, \forall m},    \label{eq:opt2b}
%     \eeq
% \end{subequations}
\begin{equation}
\begin{aligned}
    %\label{eq:opt2}
    \!\!\!\!\scalemath{0.95}{\min_{\{x_{l,m,p}\}, \{t_{l,m}\}}}&~ \scalemath{0.95}{\sumK\!\sumM\!\sumP (\!-2f_c\xi_k t_{l,m} \!+\! x_{l,m,p} \!+\! \xi_k \gamma_{n,l,m,p})^2}\!\!,\\ 
    \text{subject to} &~\scalemath{1}{t_{l,m} \in \cT, \forall m},\nonumber 
\end{aligned}    
\end{equation}
which can be decomposed into $M$ separable problems, for $m=1,\dots,M$:
\vspace{-0.3cm}
%
% \begin{subequations}
%     \label{eq:opt3}
%     \beq
%     \d4\d4 \scalemath{1}{\min_{\{x_{l,m,p}\}, t_{l,m}}} && \d4\d4 \scalemath{0.92}{\sumK \! \sumP (-2f_c\xi_k t_{l,m} + x_{l,m,p} + \xi_k \gamma_{n,l,m,p})^2}\!, \label{eq:opt3a}\\ 
%     \text{subject to} 
%     && \d4\d4 \scalemath{1}{t_{l,m} \in \cT}.    \label{eq:opt3b}
%     \eeq 
% \end{subequations}
%%
\begin{equation}
    \begin{aligned}
   \label{eq:opt3}
    \d4\d4 \scalemath{1}{\min_{\{x_{l,m,p}\}, t_{l,m}}} &~~ \scalemath{1}{\sumK \! \sumP (-2f_c\xi_k t_{l,m} + x_{l,m,p} + \xi_k \gamma_{n,l,m,p})^2}\!,\\ 
    \text{subject to} 
    &~~ \scalemath{1}{t_{l,m} \in \cT}.  
    \end{aligned}
\end{equation}
To circumvent the difficulty in dealing with the discrete constraint in \eqref{eq:opt3} and the coupling between $t_{l,m}$ and $\{x_{l,m,p}\}$, we devise an alternating optimization algorithm.  
In particular, optimizing one parameter by fixing the other parameter in \eqref{eq:opt3} addresses the coupling issue. 
First, $t_{l,m}$ is optimized while fixing $\{x_{l,m,p}\}$ and temporarily neglecting the constraint in \eqref{eq:opt3}: \looseness=-1
\vspace{-0.2cm}
%\small
\begin{equation}
    \label{eq:opt5}
\min_{t_{l,m}} ~~ \sumK \sumP    \scalemath{1}{(-2f_c\xi_k t_{l,m} + x_{l,m,p} + \xi_k \gamma_{n,l,m,p})^2.}
\vspace{-0.2cm}
\end{equation} 
Since \eqref{eq:opt5} is convex, the optimal solution to \eqref{eq:opt5} is found by setting the first-order derivative of \eqref{eq:opt5} with respect to $t_{l,m}$ to zero, i.e., $\sumK \sumP -2f_c\xi_k(-2f_c\xi_k t_{l,m} $ $+ x_{l,m,p} + \xi_k \gamma_{n,l,m,p}) = 0$. 
From \eqref{eq:xik}, one can confirm that $\sumK \xi_{k}^2 = \Gamma K$, where $\Gamma = 1+\frac{B^2(K^2-1)}{12f_c^2K^2}$, leading to the solution to \eqref{eq:opt5},
\vspace{-0.1cm}
%\small
\begin{equation}\label{vartheta_lm}
    \scalemath{1}{t_{l,m} = \frac{1}{2f_cP} \Big(\frac{1}{\Gamma}\sumP x_{l,m,p} + \sumP \gamma_{n,l,m,p}\Big).} 
\vspace{-0.1cm}
\end{equation}%\    normalsize
Next, $\{x_{l,m,p}\}$ are optimized while fixing $t_{l,m}$: 
\vspace{-0.1cm} 
\begin{equation}
    \label{eq:opt4}
\min_{\{x_{l,m,p}\}} ~~ \scalemath{1}{\sumK \sumP (-2f_c\xi_k t_{l,m} + x_{l,m,p} + \xi_k \gamma_{n,l,m,p})^2.}
\vspace{-0.1cm} 
\end{equation}
Similarly to \eqref{eq:opt5}, \eqref{eq:opt4} is also convex and its solution is attained by equating its first-order derivative to be zero, $\sumK \scalemath{0.92}{2(-2f_c\xi_k t_{l,m} + x_{l,m,p} + \xi_k \gamma_{n,l,m,p}) = 0}$, $\forall p$. 
From \eqref{eq:xik}, it is not difficult to verify that $\sumK \scalemath{0.9}{\xi_k = K}$, implying the solution to \eqref{eq:opt4},\looseness=-1
%\small
\vspace{-0.2cm}
\begin{equation}\label{eq:x_lmp}
    \scalemath{1}{x_{l,m,p} =  2f_ct_{l,m} - \gamma_{n,l,m,p}, \forall p.}
\vspace{-0.5cm}    
\end{equation}
\begin{algorithm}[htp]
   \caption{Alternating optimization of PS and TTD values}
   \label{algorithm1}
\begin{algorithmic}[1]
 \renewcommand{\algorithmicensure}{\textbf{Output:}}
 \REQUIRE $\{\xi_k\}$, $\{\gamma_{n,l,m,p}\}$, $N_{iter}$, $\cT$, and $\epsilon$  
 \ENSURE $\{t^{\star}_{l,m}\}$ and $\{x^{\star}_{l,m,p}\}$  
 \STATE{\textbf{Initialization:} $\scalemath{0.9}{t^{(1)}_{l,m}=0, \forall m}$, $\scalemath{0.9}{x^{(1)}_{l,m,p}=0}$, $\forall m, p$,  and $i = 1$ \label{step1}} 
    \FOR{$m = 1:M$}
    \WHILE{$i < N_{iter}$}   
    \STATE{$t^{(i+1)}_{l,m} = \frac{1}{2f_cP} \Big(\frac{1}{\Gamma}\sumP x^{(i)}_{l,m,p} + \sumP \gamma_{n,l,m,p}\Big)$}
    \STATE {$t_{l,m}^{(i+1)} = \cQ_{\cT}(t^{(i+1)}_{l,m})$ \label{step5}}
    \FOR{$p=1:P$}
    \STATE{$x^{(i+1)}_{l,m,p} =  2f_ct^{(i)}_{l,m} - \gamma_{n,l,m,p}$ \label{step7}}
    \ENDFOR
    %%%
    %\STATE \textbf{update $t^{(i+1)}_{l,m}$}
    %%%
    \IF{$\frac{\|\bx^{(i+1)}_{l,m} - \bx^{(i)}_{l,m}\|^2 + (t^{(i+1)}_{l,m}-t^{(i)}_{l,m})^2}{\|\bx^{(i+1)}_{l,m}\|^2 + (t^{(i+1)}_{l,m})^2} \leq \epsilon$}
    \RETURN{$t^{\star}_{l,m} = t^{(i+1)}_{l,m}$ and $x^{\star}_{l,m,p} = x^{(i+1)}_{l,m,p}$}, $\forall p$
    %\STATE{\textbf{break}}
    \ENDIF
    \STATE $i = i+1$
    \STATE{$t^{\star}_{l,m} = t^{(i)}_{l,m}$ and $x^{\star}_{l,m,p} = x^{(i)}_{l,m,p}$}, $\forall m,p$
    \ENDWHILE
    \ENDFOR
\end{algorithmic}
\end{algorithm}
\begin{figure*}[htp]
        \centering
        \subfloat[]{
        \includegraphics[scale=.33,trim=0.0cm 0cm 0.0cm 0cm, clip]{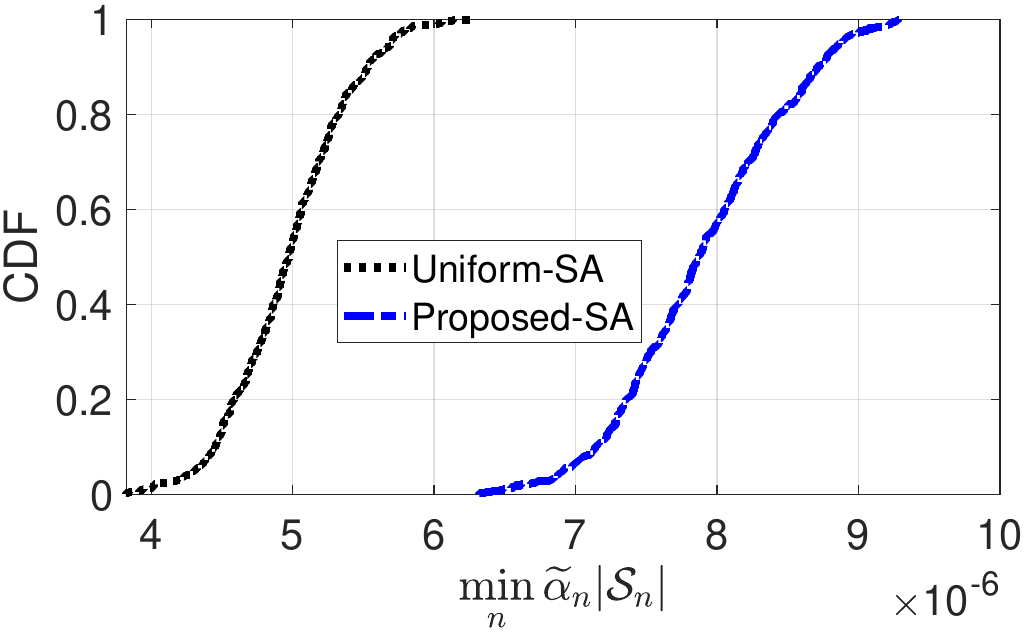}
        \label{FigPSAUSA}
        }
       \centering
        \subfloat[]{
        \includegraphics[scale=.325,trim=0cm 0cm 0.0cm 0cm, clip]{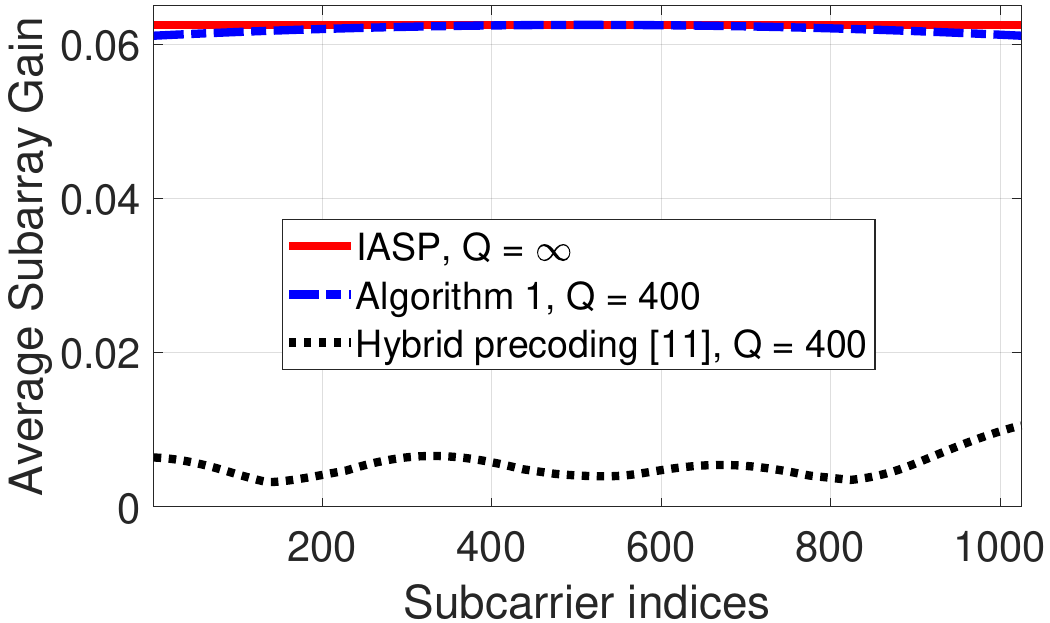}
        \label{FigAG}
        }
        \centering
        \subfloat[]{\includegraphics[scale=.34,trim=0.0cm 0cm 0.0cm 0cm, clip]{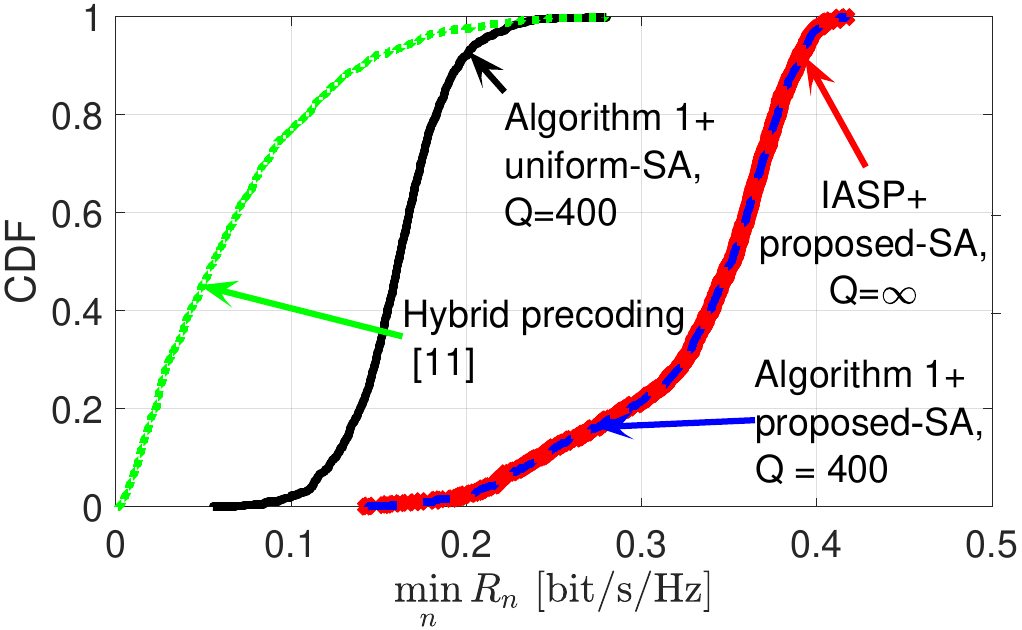}
         \label{Fig4}}  
  %  \vspace{-0.5cm}
        \caption{(a) CDFs of $\min_{n}\tilde{\alpha}_n|\cS_n|$ of the proposed-SA and uniform-SA. (b) Average subarray gain performances of Algorithm~\ref{algorithm1} with $Q = 400$, IASP with $Q = \infty$, and benchmarkc      \cite{Najjar2022} with $Q=400$. (c) CDFs of $\min_{n} R_n$ of the proposed approach with $Q=400$, IASP with $Q=\infty$, and benchmark \cite{Najjar2022} with $Q = 400$.}
        \vspace{-0.5cm}    
    \label{figSim}
    \end{figure*}
A formal description of the alternating optimization is presented in Algorithm \ref{algorithm1}. 
Encountering the discrete constraint in \eqref{eq:opt1b}, it finds an approximate solution $\{t_{l,m}^{\star}\}$ and $\{x_{l,m,p}^{\star}\}$ to \eqref{eq:opt1}. 
In Step~\ref{step1}, we set $t^{(1)}_{l,m} = 0$, $\forall m$ and $x^{(1)}_{l,m,p} = 0$, $\forall m,p$, as the initial TTD and PS values, respectively.  
At the $(i+1)$th iteration, we update $t^{(i+1)}_{l,m}$ based on \eqref{vartheta_lm}.   
Then, in Step~\ref{step5}, the TTD value $t^{(i+1)}_{l,m}$ is quantized by $\cQ_{\cT}(\cdot)$, where $\cQ_{\cT}(x) = \tilde{t}$ such that $\tilde{t} = \argmin_{t\in \cT} |x-t|$.  
In Step~\ref{step7}, the $\{x^{(i+1)}_{l,m,p}\}$ is updated based on \eqref{eq:x_lmp}.
These steps are repeated until either the convergence condition is satisfied or the maximum number of iterations $N_{iter}$ is reached.
The convergence condition is met when the normalized mean square error (NMSE) at the $(i+1)$ iteration is upper bounded by $\epsilon$, i.e.,  $\frac{\|\bx^{(i+1)}_{l,m} - \bx^{(i)}_{l,m}\|^2 + (t^{(i+1)}_{l,m}-t^{(i)}_{l,m})^2}{\|\bx^{(i+1)}_{l,m}\|^2 + (t^{(i+1)}_{l,m})^2}\leq \epsilon$. 
%%
%%\vspace{-0.5cm}

\section{Simulation Results}
This section presents numerical simulations that evaluate the cumulative distribution function (CDF) of the asymptotic min-subarray gain $\min_{n} \tilde{\alpha}_n |\cS_n|$, the average subarray gain, and the CDF of the minimum achievable rate $\min_{n} R_n$ of the proposed approach.\looseness=-1 
The channel matrices is generated based on the sub-THz channel model in \cite{Han2015, Jornet2011}. 
We set the system parameters to $f_c = 300$ GHz, $B = 30$ GHz, $K = 1025$, $N = 4$, $N_{RF} = 16$, $M = 16$, $N_t = 1024$, $N_r = 4$, and $SNR = \rho = 10$ (dB). 
The communication distances from the transmitter to the four users are, respectively, $\{10,15,20,25\}$ meters. 
The channel path gains $\{\alpha_{k,n}\}$ and delays $\{\tau_n\}$ are randomly generated following \cite{Han2015, Jornet2011}.
The AoAs and AoDs are uniformly drawn in $[-\frac{\pi}{2},\frac{\pi}{2}]$.
The maximum TTD value is $(Q-1)\tau = 1596$ ps, where $Q=400$ and $\tau = 4$ ps.
We set $N_{iter} = 50$ and $\epsilon = 0.01$ in Algorithm~\ref{algorithm1}. 
The performance curves are averaged over 500 channel realizations. \looseness=-1 
Fig.~\ref{FigPSAUSA} demonstrates the CDF of the asymptotic min-subarray gain $\min_{n}\tilde{\alpha}_n|\cS_n|$ performance of the proposed subarray allocation (proposed-SA) approach. 
The uniform subarray allocation (uniform-SA)  in Fig.~\ref{FigPSAUSA} allocates $|\cS_n|= 4$, for $n=1,\dots,4$. 
Fig.~\ref{FigPSAUSA} illustrates a substantial improvement of the proposed approach compared to the uniform-SA. 
For example, the proposed-SA allocates $|\cS_1|=2$, $|\cS_2|=3$, $|\cS_3| = 5$, and $|\cS_4|=6$ because the channel in Fig.~\ref{FigPSAUSA} is realized such that the third and fourth users experience more severe path-loss than the first and second users. 
This confirms the observation in Remark~\ref{rmk1}. \looseness=-1 
%%
%%
   % \begincludegraphics[width=0.35\textwidth]{}
   %  \vspace{-0.2cm}
   %  \caption{Average subarray gain performances of Algorithm~\ref{algorithm1} with $Q = 400$, ideal sub-precoder with $Q = \infty$, and benchmark \cite{Najjar2022} with $Q=400$.}
   %  \label{FigAG}
   %  \vspace{-0.3cm}
   %  \end{figure} 
%%

%%
The average subarray gain at the $k$th subcarrier is defined as $\frac{1}{N_{RF}}\sum_{n=1}^{N}\sum_{l \in \cS_n}|\bu^{\dagger}_{k,n,l}\bff_{k,l}|$, where $\bu_{k,n,l}$ and $\bff_{k,l}$ are defined in \eqref{eq:subarrayRV} and \eqref{eq:subprecoder}, respectively.
Fig.~\ref{FigAG} compares the average subarray gain performances of Algorithm~\ref{algorithm1} with $Q=400$, benchmark \cite{Najjar2022} with $Q = 400$, and IASP in \eqref{eq:assumption1} with $Q = \infty$. 
It can be observed in Fig.~\ref{FigAG} that the average subarray gain of Algorithm~\ref{algorithm1} with the finite TTD resolution $Q=400$ shows a similar performance as that of the IASP with $Q = \infty$. 
Additionally, the Algorithm~\ref{algorithm1} achieves higher average subarray gain performance than the benchmark \cite{Najjar2022} with the same TTD resolution. \looseness=-1
%%
%%
   % \begin{figure}[h]
   %  \centering     \includegraphics[width=0.4\textwidth]{}
   %  \vspace{-0.2cm}
   %  \caption{CDF of $\min_{n} R_n$ of the proposed approach with $Q=400$, ideal analog sub-precoders in \eqref{eq:assumption1} with $Q=\infty$, and benchmark \cite{Najjar2022} with $Q = 400$.}
   %  \label{Fig4}
   %  \vspace{-0.5cm}
   %  \end{figure} 
%%

%%
Fig~\ref{Fig4} illustrates the CDFs of the minimum achievable rate ($\min_{n} R_n$) of the proposed Algorithm~\ref{algorithm1} with the uniform-SA and the proposed-SA compared with the benchmark \cite{Najjar2022}.  
When $Q = 400$, it is seen from Fig.~\ref{Fig4} that the proposed-SA outperforms the uniform-SA substantially.
This is consistent with the asymptotic min-subarray gain trend shown in Fig.~\ref{FigPSAUSA}.
The Algorithm~\ref{algorithm1} achieves a close performance to the IASP in \eqref{eq:assumption1} with $Q = \infty$ and substantially outperforms the benchmark \cite{Najjar2022}, which is aligned with the subarray gain trend in Fig. \ref{FigAG}.
\section{Conclusion}
A signal processing approach has been proposed to ensure a fair rate performance of a real-world-suitable wideband sub-THz massive MIMO OFDM system that employs no analog combiners. 
The no-combiner architecture is critical in practical deployment due to space limitation and power loss. 
First, a subarray allocation problem was formulated to provide a fair allocation to each user when the number of subarrays is larger than the number of users. 
An approximate solution to the first sub-problem was obtained by continuous relaxation, which suggests that the allocation of subarrays to each user should be inversely proportional to its channel gain. 
Second, a joint PS and TTD sub-precoders design problem was formulated to maximize the subarray gain subject to the limited range and finite TTD resolution.
A solution was attained by transforming it to the phase domain and applying an alternating optimization technique, which achieves a close subarray gain performance to the ideal analog sub-precoders that maximize the subarray gain.  
The advantages of the proposed approach were demonstrated by numerical results.
\looseness=-1 
\appendices
\section{Proof of Lemma~\ref{lm:RateUB}}\label{appendixC}
Plugging $\bH_{k,n}$ in \eqref{eq:channel} into $R_n$ in \eqref{eq:Rate}  gives 
\vspace{-0.2cm}
\begin{subequations}
\label{eq:RateBound}
\beq
\d4\d4\d4 \scalemath{0.9}{R_n} \d4&\leq&\d4 \scalemath{0.8}{\sumK \log_2\Big( 1 + \frac{\rho N_rN_t}{N}\|\alpha_{k,n}\bv_{k,n}\bu^{\dagger}_{k,n}\bF_1\bF_{2,k}\bw_{k,n}\|^2\Big)}, \label{eq:RateBounda}\\
\d4&\leq&\d4 \scalemath{0.9}{\frac{\rho N_r N_t}{\ln(2)N}\sumK|\alpha_{k,n}|^2\|\bv_{k,n}\|^2\|\bu^{\dagger}_{k,n}\bF_1\bF_{2,k}\|^2\|\bw_{k,n}\|^2},\label{eq:RateBoundb}\\
\d4&\leq&\d4 \scalemath{0.9}{\frac{\omega\rho N_r N_t}{\ln(2)N}\sumK|\alpha_{k,n}|^2\|\bu^{\dagger}_{k,n}\bF_1\bF_{2,k}\|^2}, \label{eq:RateBoundc}
\eeq 
%\vspace{-0.2cm}
\end{subequations}
where \eqref{eq:RateBounda} follows from removing the inter-user interference term in $R_n$, \eqref{eq:RateBoundb} is due to the fact that $\ln(1+x) \leq x$ for $x \geq 0$ and applying Cauchy-Schwartz's inequality to 
$\scalemath{0.9}{\|\bv_{k,n}\bu^{\dagger}_{k,n}\bF_1\bF_{2,k}\bw_{k,n}\|^2}$, and \eqref{eq:RateBoundc} follows from \eqref{eq:arrayRV2} and \eqref{eq:assumption2}.
We note that  
\begin{subequations}
\label{eq:UB}
\beq
\d4\d4\d4\scalemath{0.9}{\|\bu^{\dagger}_{k,n}\bF_1\bF_{2,k}\|^2} \d4&=&\d4 \scalemath{0.9}{\sum_{l \in \cS_n}\|\bu^{\dagger}_{k,n,l}\bff_{k,l}\|^2 +\d4 \sum\limits_{\substack{l \in \cS_{n'}, n' \neq n}}\d4\d4\|\bu^{\dagger}_{k,n,l}\bff_{k,l}\|^2},\label{eq:UBa}\\
\d4&=&\d4 \scalemath{0.9}{\sum_{l \in \cS_n}\!\!\|\bu^{\dagger}_{k,n,l}\bu_{k,n,l}\|^2 + \d4\d4\sum\limits_{\substack{l \in \cS_{n'}, n' \neq n}}\d4\d4\!\|\bu^{\dagger}_{k,n,l}\bu_{k,n',l}\|^2\!\!},\label{eq:UBb}\\ 
\d4&=&\d4 \scalemath{0.9}{\frac{|\cS_n|}{N_{RF}} + \d4\d4 \sum\limits_{\substack{l \in \cS_{n'}, n' \neq n}}\|\bu^{\dagger}_{k,n,l}\bu_{k,n',l}\|^2},\label{eq:UBc}
\eeq
\end{subequations}
where \eqref{eq:UBa} follows from \eqref{eq:Fk}, \eqref{eq:UBb} is due to \eqref{eq:assumption1}, and \eqref{eq:UBc} is due to \eqref{eq:subarrayRV}. 
For $n \neq n'$, the subarray response vectors $\bu_{k,n,l}$ and $\bu_{k,n',l}$ are orthogonal with probability $1$, as \small$MP \rightarrow \infty$\normalsize \cite{ayach2014}.  
Hence,
the r.h.s of \eqref{eq:RateBoundc} tends to $\frac{\omega\rho N_r N_t}{\ln(2)N}\frac{|\cS_n|}{N_{RF}}\sumK|\alpha_{k,n}|^2$ with probability 1, as \small$MP \rightarrow \infty$\normalsize, completing the proof.\looseness=-1 
\section{Proof of Lemma \ref{lemma:optimal_SA2}}
\label{appendixA}
Note that 
\small$\!\! \min_{n} \widetilde{\alpha}_n |\cS_n| = \!\! \Big(\!\!\min_{n} \frac{|\cS_n|}{1/\widetilde{\alpha}_n}\!\!\Big) \Big(\!\!\sumN \frac{1/\widetilde{\alpha}_n}{\sumN 1/\widetilde{\alpha}_n} \!\!\Big) \!\! \leq \frac{|\cS_1|}{1/\widetilde{\alpha}_1}\frac{1/\widetilde{\alpha}_1}{\sumN 1/\widetilde{\alpha}_n} + \dots + \!\! \frac{|\cS_N|}{1/\widetilde{\alpha}_N}\frac{1/\widetilde{\alpha}_N}{\sumN 1/\widetilde{\alpha}_n}\!\!=\frac{\sumN |\cS_n|}{\sumN 1/\widetilde{\alpha}_n},$\normalsize~implying that \small$\max_{\{|\cS_n|\}} \min_{n} \widetilde{\alpha}_n|\cS_n| \leq \frac{\sumN |\cS_n|}{\sumN 1/\widetilde{\alpha}_n}.$\normalsize~ \sloppy  
The upper bound is achieved as the equality if and only if   $\scalemath{0.8}{|\cS_n| = \frac{(1/\widetilde{\alpha}_n) \sum_{n'=1}^{N} |\cS_{n'}|}{\sumN 1/\widetilde{\alpha}_n} = \frac{N_{RF}/\widetilde{\alpha}_n}{\sumN 1/\widetilde{\alpha}_n}},$ completing the proof.\sloppy 
\vspace{-0.1cm}
% \section{Proof of Lemma~\ref{lm2}}\label{appendixB}
%     Without loss of generality, we assume  $0 < |x - y| < \pi$. 
%      Then, ${\argmin}_{{y}}|e^{jx}-e^{jy}| = {\argmin}_{y}\Big|\sin\Big(\frac{x-y}{2}\Big)\Big|={\argmin}_{y}\sin\Big(\Big|\frac{x-y}{2}\Big|\Big)= \argmin_{y} |x-y|$\normalsize, 
%     where the last step follows from the fact that $\sin(z)$  is an increasing function of $0 < z < \frac{\pi}{2}$. This completes the proof.\looseness=-1 
%\vspace{-0.5cm}
\bibliographystyle{IEEEtran} 
\vspace{0.4cm}
\bibliography{biblib} 
\end{document}